\documentclass[pdftex]{article} 
\def\Title {Optimal acceptance sampling for modules F and F1 of the European Measuring Instruments Directive}

	
\usepackage[a4paper]{geometry}
\usepackage{amsmath,amsfonts,amssymb,bm,amsthm}
\usepackage[font=small,labelfont=bf]{caption}
\newtheorem{prop}{Proposition}

\usepackage{graphicx,multicol}
\usepackage{array,booktabs}
\newcolumntype{C}[1]{>{\centering\let\newline\\\arraybackslash\hspace{0pt}}p{#1}}
\newcolumntype{R}[1]{>{\raggedleft\let\newline\\\arraybackslash\hspace{0pt}}p{#1}}

\usepackage[table]{xcolor}

\usepackage[pdftex]{hyperref}
\definecolor{darkblue}{rgb}{0,0,.75}
\hypersetup{colorlinks=true, breaklinks=true,
linkcolor=darkblue, menucolor=darkblue, 
urlcolor=darkblue,citecolor=darkblue}

\newcommand{\be}{\begin{equation}}
\newcommand{\ee}{\end{equation}}
\newcommand{\Pac}{P_\text{ac}} 
\newcommand{\pa}{p_\text{a}}
\newcommand{\pb}{p_\text{b}}
\newcommand{\qa}{q_\text{a}}
\newcommand{\qb}{q_\text{b}}
\newcommand{\Pa}{P_\text{a}}
\newcommand{\Pb}{P_\text{b}}
\newcommand{\MIDa}{MID$_\text{a}$}
\newcommand{\MIDb}{MID$_\text{b}$}

\newcommand{\rev}[1]{#1}

\begin{document}

\title{\Title}
\author{%
Cord A.\ M\"uller \thanks{cord.mueller@lmg.bayern.de}\\
German Academy of Metrology (DAM),\\ Bavarian State Office for Weights and Measures (LMG),\\
Wittelsbacherstr.~17, 83435 Bad Reichenhall, Germany
}
\date{\today}

\maketitle

\begin{abstract} 
Acceptance sampling plans offered by ISO 2859-1 are far from optimal under the conditions for statistical verification in modules F and F1 as prescribed by Annex II of the Measuring Instruments Directive (MID) 2014/32/EU, resulting in sample sizes that are larger than necessary.
An optimised single-sampling scheme is derived, both for large lots using the binomial distribution and for finite-sized lots using the exact hypergeometric distribution, resulting in smaller sample sizes that are economically more efficient while offering the full statistical protection required by the MID. 
\end{abstract} 

\tableofcontents 

\newpage

\begin{multicols}{2}

\section{Introduction}
\label{sec:intro} 

\begin{figure*}
\centering
\includegraphics[width=0.9\textwidth]{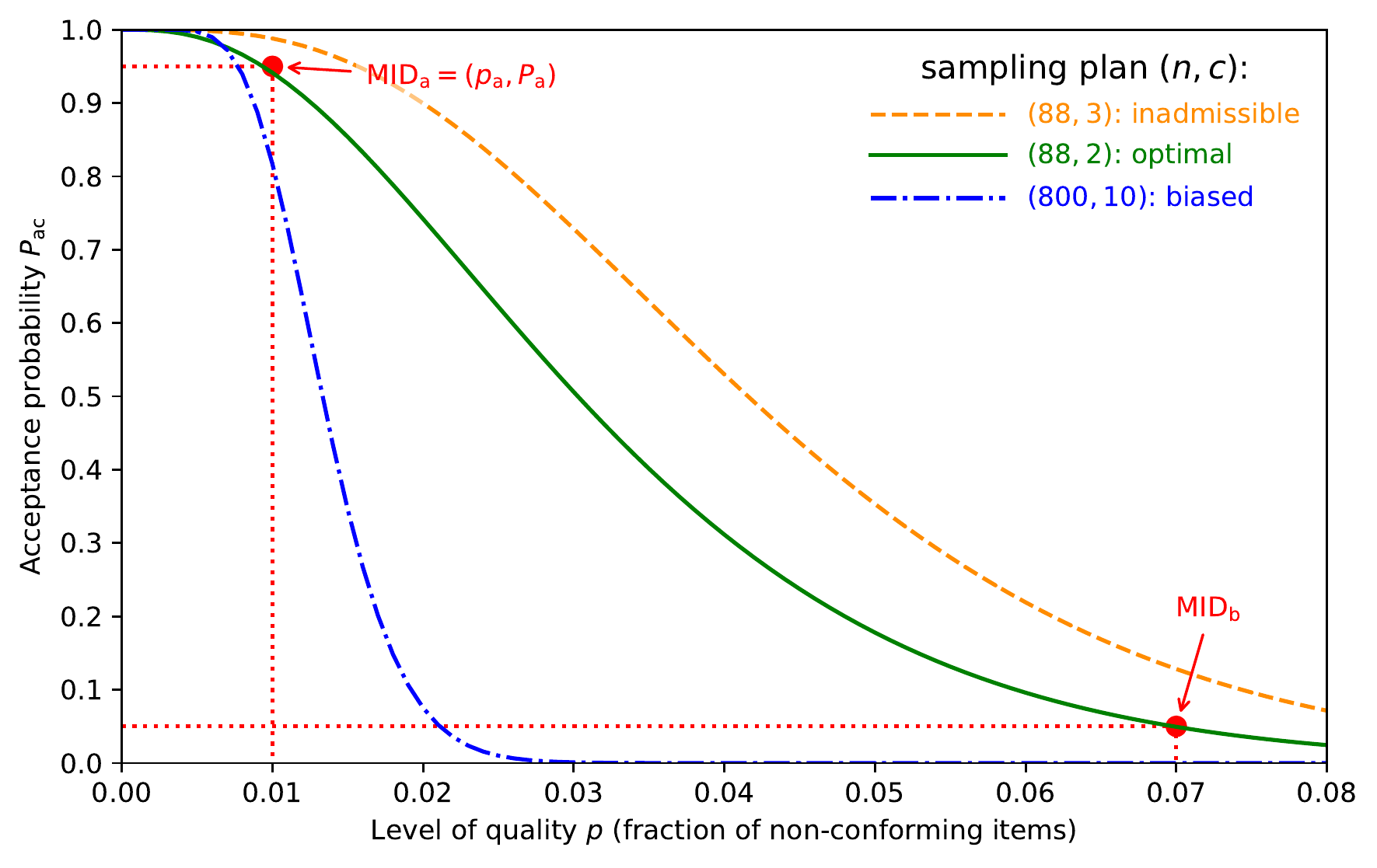}
\caption{%
Acceptance probability $\Pac$ as function of quality level $p$---the operating characteristic (OC)---of different single-sampling plans.   
An OC that does not lie below both points $\text{\MIDa{}}=(\pa,\Pa)=(0.01,0.95)$ and $\text{\MIDb{}}=(\pb,\Pb)=(0.07,0.05)$ is inadmissible according to the conditions \eqref{MID3} and \eqref{MID4}. 
Optimal (unbiased) OCs are close to, and just below of, the two MID points. 
The three OCs shown are plotted using the binomial probability distribution [eq.~\eqref{binomialcgeneral}] for single sampling of $n$ items with  
acceptance number $c$. 
The plan $(n=88,c=3)$ is inadmissible, $(88,2)$ is optimal, and $(800,10)$
is biased toward enforcing a much better quality level than required by the MID.  
} 
\label{fig_OC_MID_stats}  
\end{figure*} 

The European Measuring Instruments Directive 2014/32/EU (MID) \cite{MID} states in Annex II that the statistical verification of products in modules F and F1 must respect the following conditions:

    ``The statistical control will be based on attributes. The sampling system shall ensure: 
\begin{itemize}
      \item[(a)] a level of quality corresponding to a probability of acceptance of 95\%, with a non-conformity of less than 1\%;
      \item[(b)] a limit quality corresponding to a probability of acceptance of 5\%, with a non-conformity of less than 7\%.''  
\end{itemize} 

These wordings must be cast into mathematical equations in order to determine acceptance sampling plans to be used in practice\rev{, as described by the pioneering work of Dodge and Romig \cite{Dodge1929}}. 
There appears to be consensus among the legal bodies in charge of administering the  tests \cite{Welmec8.10} that the MID conditions should be interpreted as 
\begin{align}
 \Pac(p) &= 0.95 \ \Rightarrow \  p < 0.01, \label{MID1} \\
 \Pac(p) &= 0.05 \ \Rightarrow \ p < 0.07 \label{MID2}. 
\end{align}  
Here $p$ is the quality level or ``fraction defective",  namely the fraction of non-conforming%
\footnote{We use the terms ``non-conforming'' and ``defective'' interchangeably for items that fail testing.}
items in the lot, and $\Pac$ is the acceptance probability, a property of the statistical sampling plan to be devised. 
The first quality level $\pa=0.01=1\%$ is commonly called the \emph{acceptance quality limit} (AQL). 
The complement of the acceptance probability at this point, $\alpha=1-\Pac(\pa)$, is known as the \emph{producer's risk} that a lot with this acceptable level of quality is rejected. 
The second quality level 
$\pb=0.07=7\%$ is known as the \emph{limiting quality} (LQ), and the probability of acceptance $\Pac(\pb)=\beta$ is the \emph{consumer's risk} of accepting a lot with this doubtful quality.

Since a lower quality level (i.e., larger $p$) should result in a lower acceptance probability, a valid function $\Pac(p)$ is strictly decreasing: $p>q \Leftrightarrow \Pac(p)<\Pac(q)$.%
\footnote{For finite lot sizes, $\Pac$ may not be \emph{strictly} monotonic, but just monotonic. The MID conditions \eqref{MID1} and \eqref{MID2}, however, implicitly assume an infinite lot size; for details see Appendix \ref{app:MID_finiteN} below.} 
Therefore, the two conditions \eqref{MID1} and \eqref{MID2} for a certain sampling plan to be valid can be formulated equivalently 
\begin{align}
\Pac(\pa)  & <  0.95=\Pa,   \label{MID3}\\
\Pac(\pb)  & < 0.05=\Pb.  \label{MID4}
\end{align}  
In other words, the MID conditions in the prevailing interpretation require  
\begin{enumerate} 
\item[(a)] a lot with AQL $\pa=1\%$ to imply a producer's risk 
\be 
    \alpha=1-\Pac(\pa)> 5\%, 
    \label{MIDa}
\ee 
\item [(b)] and a lot with LQ $\pb=7\%$ to imply a consumer's risk
\be
\beta=\Pac(\pb) < 5\%.
\label{MIDb}
\ee
\end{enumerate}

As a consequence, 
the graph of $\Pac(p)$, the so-called operating characteristic (OC), must lie below and left of the two points $\text{\MIDa{}}=(\pa,\Pa)$ and $\text{\MIDb{}}=(\pb,\Pb)$. 
Figure \ref{fig_OC_MID_stats} shows, as an example, 3 OCs of different single-sampling plans for very large lots, based on the binomial model of Section \ref{sec:Ninfty}. 
Too small samples are typically ruled out because their OCs violate the MID conditions by lying above at least one MID point, as illustrated by the plan with sample size $n=88$ and acceptance number $c=3$. 
In contrast, large samples and large acceptance numbers will typically result in OCs that are far below the MID points; thus they are certainly admissible. 
As an example, Fig.~\ref{fig_OC_MID_stats} shows the OC of $n=800$, $c=10$, a member of the ISO 2859-1 family \cite{ISO2859-1} that is recommended officially \cite{Welmec8.10} for lot sizes from 150\,001 till 500\,000.  
However, such a sampling is clearly biased toward a much better quality level than required by the MID conditions.       

Sampling plans favouring very low AQLs are deemed admissible because the prevailing interpretation of the MID conditions implies a producer's risk \emph{larger} than 5\% [Eq.~\eqref{MIDa}]. 
The main point of the present work is that a sampling plan that is fair and economically acceptable for producers should not impose arbitrary conditions on the producers much stricter than required by the MID, while respecting the legitimate interests of the consumers, of course.  
Therefore, MID-optimised sampling plans outside the scope of ISO 2859-1 are derived in the following sections, devoted to very large lots (Sec.~\ref{sec:Ninfty}) and finite-sized lots (Sec.~\ref{finiteN.sec}), respectively. 
Readers not interested in details of the mathematical derivation are invited to skip to Sec.~\ref{sec:summary}, where a simplified single-sampling scheme optimised for MID modules F and F1 is proposed.   

The concluding Sec.~\ref{sec:outlook} finishes on the observation that it would seem more reasonable if the sampling contract between producer and consumer bounded both their risks from above, guaranteeing both $\beta <5\%$ and $\alpha<5\%$.  
This alternative interpretation of the MID's AQL condition has interesting consequences that are briefly outlined, with details relegated to a follow-up paper.

\section{MID-optimised sampling plans for large lots} 
\label{sec:Ninfty}

\begin{figure*}\centering
\includegraphics[width=.95\textwidth]{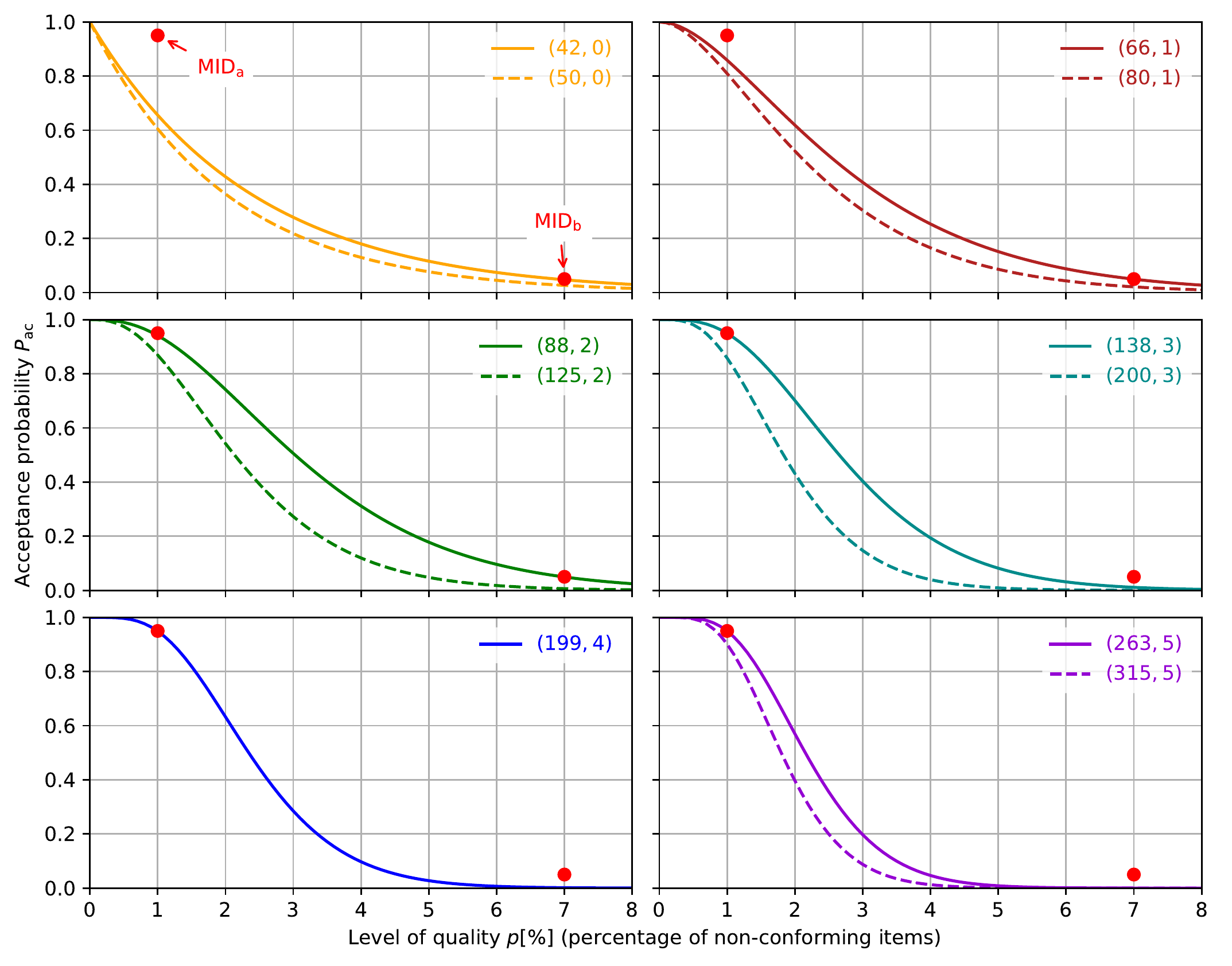}
\caption{Acceptance probability $\Pac(p,n,c)$ of various single-sampling plans $(n,c)$ as function of quality level $p$ based on the binomial distribution Eq.~\eqref{binomialcgeneral}.  
For each acceptance number $c$, the dashed curve shows the OC of the admissible standard ISO 2859-1 member \cite{Welmec8.10}, while the full curve shows the OC of the MID-minimised sample size.
For small acceptance numbers $c<2$, the LQ condition \MIDb{} is the more stringent (OCs are not steep enough). For large acceptance numbers $c>2$, the AQL condition \MIDa{} becomes more stringent (OCs are steeper than required). The optimal sample plan, defined by the smallest sample size $n$ while being unbiased with respect to both MID conditions, is $(n,c)=(88,2)$. 
} 
\label{fig_OC_opt_MID_ISO_matrix}  
\end{figure*}

In a first step, let us consider single sampling where $n$ items are drawn from a very large lot with a constant (but generally unknown) probability $p$ to be defective.%
\footnote{Strictly, a probability $0<p<1$ for a draw without replacement can only stay constant when the lot size $N$ is infinite; for now we assume $N\gg n$ and consider corrections due to finite lot size in Sec.~\ref{finiteN.sec} below.}
The entire lot is accepted if the number of defective items discovered by testing is not larger than the acceptance number $c=0,1,2,\dots$. The acceptance probability to find at most $c$ defective items, each found independently from the others with identical probability $p$,  
in a sample of size $n$ is the cumulative binomial distribution \cite{SchillingNeubauer2017,Mathews2010} 
\be
\label{binomialcgeneral}
\Pac(p,n,c) 
= \sum_{k=0}^{c}\binom{n}{k}p^k(1-p)^{n-k}, 
\ee
where the binomial coefficient $\tbinom{n}{k}=n!/[k!(n-k)!]$ counts the number of different choices of $k$ items among $n$. 

\begin{table*}
\centering
\setlength{\tabcolsep}{0pt}
\begin{tabular}{*{2}{C{2.5em}}|*{5}{C{6.5em}}}
$n$	& 	$c$ & $\Pac(\pa)$  & $\alpha=1-\Pac(\pa)$ & $\qa$ & $\beta=\Pac(\pb) $  & $\qb$ \\
&		& \% & \% & \% & \%  & \% \\
\hline
\rowcolor{gray!25}	42	&	0	& 65.6 	& 34.4	& 0.122 & \textbf{4.75}  & \textbf{6.88} \\
 	                    50	&	0	& 60.5 	& 39.5 	& 0.103 & 2.66  & 5.82 \\
\rowcolor{gray!25}	66	&	1	& 85.9 	& 14.1 	& 0.541 & \textbf{4.96}  & \textbf{6.99} \\
 	                    80	&	1	& 80.9 	& 19.1 	& 0.446 & 2.11  & 5.79 \\
\rowcolor{gray!25}	88	&	2	& \textbf{94.1} 	&  \textbf{5.87} 	& \textbf{0.936} & \textbf{4.94}  & \textbf{6.98} \\
                        125	&	2	& 86.9 	& 13.1 	& 0.657 & 0.62  & 4.95 \\
\rowcolor{gray!25}   138	&	3	& \textbf{94.9} 	&  \textbf{5.06} 	& \textbf{0.996} & 1.11  & 5.52 \\
                        200	&	3	& 85.8 	& 14.2 	& 0.686 & 0.03  & 3.83 \\
\rowcolor{gray!25}    199	&	4	& \textbf{94.9} 	&  \textbf{5.09} 	& \textbf{0.995} & 0.15  & 4.54 \\
                        -	&	4	& - 	& - 	& -  	& -  	& - \\
\rowcolor{gray!25}   263	&	5	& \textbf{95.0} &  \textbf{5.04}	& \textbf{0.998} & 0.02  & 3.96 \\
                        315	&	5	& 90.1 	& 9.88 	& 0.833 & 0.00 	& 3.31 \\
\end{tabular} 
\caption{Performance of the single-sampling plans $(n,c)$ whose OCs are shown in Fig.~\ref{fig_OC_opt_MID_ISO_matrix}. 
The producer's (consumer's) risk quality PRQ (CRQ) $\qa$ ($\qb$) is the quality level where the allowed bound is reached, namely $\Pac(\qa)=\Pa=95\%$ ($\Pac(\qb)=\Pb=5\%$).  
Gray-colored rows are the MID-optimised sample plans, and white rows are the standard members of the ISO 2859-1 family; these offer no sample size for $c=4$. 
Bold figures are close to, i.e., within $1\%$ [$0.1\%$ for the PRQ $\qa$] of the MID conditions.  The least biased, minimal sample plan is (88,2).
} 
\label{tab_nc_opt_MID_ISO_binom}
\end{table*}  

Figure \ref{fig_OC_opt_MID_ISO_matrix} shows the resulting OCs for various single-sampling plans 
$(n,c)$, grouped with increasing acceptance number $c=0,1,2,\dots$ into pairs. 
The dashed curve shows the respective member of the ISO 2859-1 family at inspection level II as recommended in \cite{Welmec8.10}.%
\footnote{For $c=4$, ISO 2859-1 offers no sample size. Instead, it jumps directly from $(200,3)$ to $(315,5)$.}
The full curve shows the smallest sample compatible with both MID conditions, which can be computed using standard numerical tools or commercial software, lowering the sample size until one of the two MID conditions is violated. Not surprisingly, 
when relaxing the constraint to the somewhat arbitrary members of the ISO 2859-1 family, one obtains considerably smaller sample sizes. 
\rev{%
Qualitatively, one arrives at the same conclusion when approximating the binomial distribution \eqref{binomialcgeneral} for $p\to 0$, $n \to \infty$ at fixed $pn$  by the cumulative Poisson distribution $\Pac(p,n,c) \approx \sum_{k=0}^c
e^{-np}(np)^k/k!$. 
Indeed, the respective minimal sample sizes under the Poisson approximation are, for $c=0$, $n=43(+1)$, for $c=1$, $n=68 (+2)$, for $c=2$, $n=90 (+2)$, for $c=3$, $n=137(-1)$, for $c=4$, $n=198 (-1)$, for $c=5$, $n=262(-1)$, etc. 
Here, each number in parentheses denotes the difference to the optimal, and more accurate binomial result displayed in the first two columns and grey-colored rows of Table \ref{tab_nc_opt_MID_ISO_binom}.   
}

Because the probabilities for the different cases $k=0,1,2,\dots,c$ in  Eq.~\eqref{binomialcgeneral} add up and still must stay below the MID points, the minimum sample size grows with the acceptance number.
In return, the OCs gain in specificity (smaller $\alpha$) and sensitivity or statistical power (smaller $\beta$), i.e., permit to distinguish more accurately between high and low quality levels.    
For small acceptance numbers $c=0,1$, the LQ condition of \MIDb{} is the more stringent, i.e., OCs are not steep enough. For large acceptance numbers $c>2$, the AQL condition of \MIDa{} becomes more stringent, i.e., OCs are steeper than required. 
The optimal sample plan, with smallest sample size $n$ while least biased with respect to both MID conditions, is found to be $(n,c)=(88,2)$.

Table \ref{tab_nc_opt_MID_ISO_binom} lists the corresponding data, allowing for a quantitative comparison of the ISO 2859-1 and MID-optimised sample plans. 
As Fig.~\ref{fig_OC_opt_MID_ISO_matrix} already shows, the LQ criterion \MIDb{} can be saturated quite well for low acceptance numbers, with a consumer's risk quality (CRQ) $\qb$ such that $\Pac(\qb)=\Pb$ not much below $\pb=7\%$, and a consumer's risk $\beta$ not much below $\Pb=5\%$. 
The price to be paid for small acceptance numbers and sample sizes is an elevated producer's risk $\alpha$, i.e., a high probability for the type I error of rejecting a good lot. 
And the smallest admissible member $(n,c)=(50,0)$ of the ISO 2859-1 family at inspection level II is stricter than necessary with a producer's risk of $\alpha=39.5\%$; the corresponding producer's risk quality (PRQ) required to reach $\Pac(\qa)=\Pa=0.95$ is as low as $\qa=0.103\%$. 
The minimal single-sampling plan $(n,c)=(42,0)$ compatible with MID conditions implies a slightly smaller producer's risk $\alpha \approx 34.4\%$ with a slightly larger PRQ of $\qa=0.122$ and a CRQ $\qb$ just below $7\%$.  

Conversely, for larger acceptance numbers $c\geq 2$ and, thus, larger sample sizes, the AQL criterion \MIDa{} becomes saturated with a producer's risk $\alpha$ approaching $5\%$ from above. Here, the consumer's risk drops to values $\beta\ll 1\%$ much lower than required by MID, and a CRQ $\qb$ substantially smaller than $7\%$. 
Larger samples are indeed generally known to reduce the probabilities of type I and II errors and to have a greater discriminating power 
\cite{SchillingNeubauer2017,Mathews2010}.  
Thus, larger samples and higher acceptance numbers have their merits in internal production control and may well be suggested in the corresponding MID modules.
However, a systematic growth of sample size with lot size, as recommended by the ISO 2859-1 sampling system, is not warranted by the MID conditions for modules F and F1.    

In summary so far: 
For large enough lot sizes $N\gg n$ (see Section~\ref{sec:summary_finiteN} for a quantitative discussion) 
the minimal, least biased sample plan for statistical product verification in the MID modules F and F1 is $(n,c)=(88,2)$.

\section{Finite lot sizes} 
\label{finiteN.sec}

\begin{figure*}
\centering
\includegraphics[width=.9\textwidth]{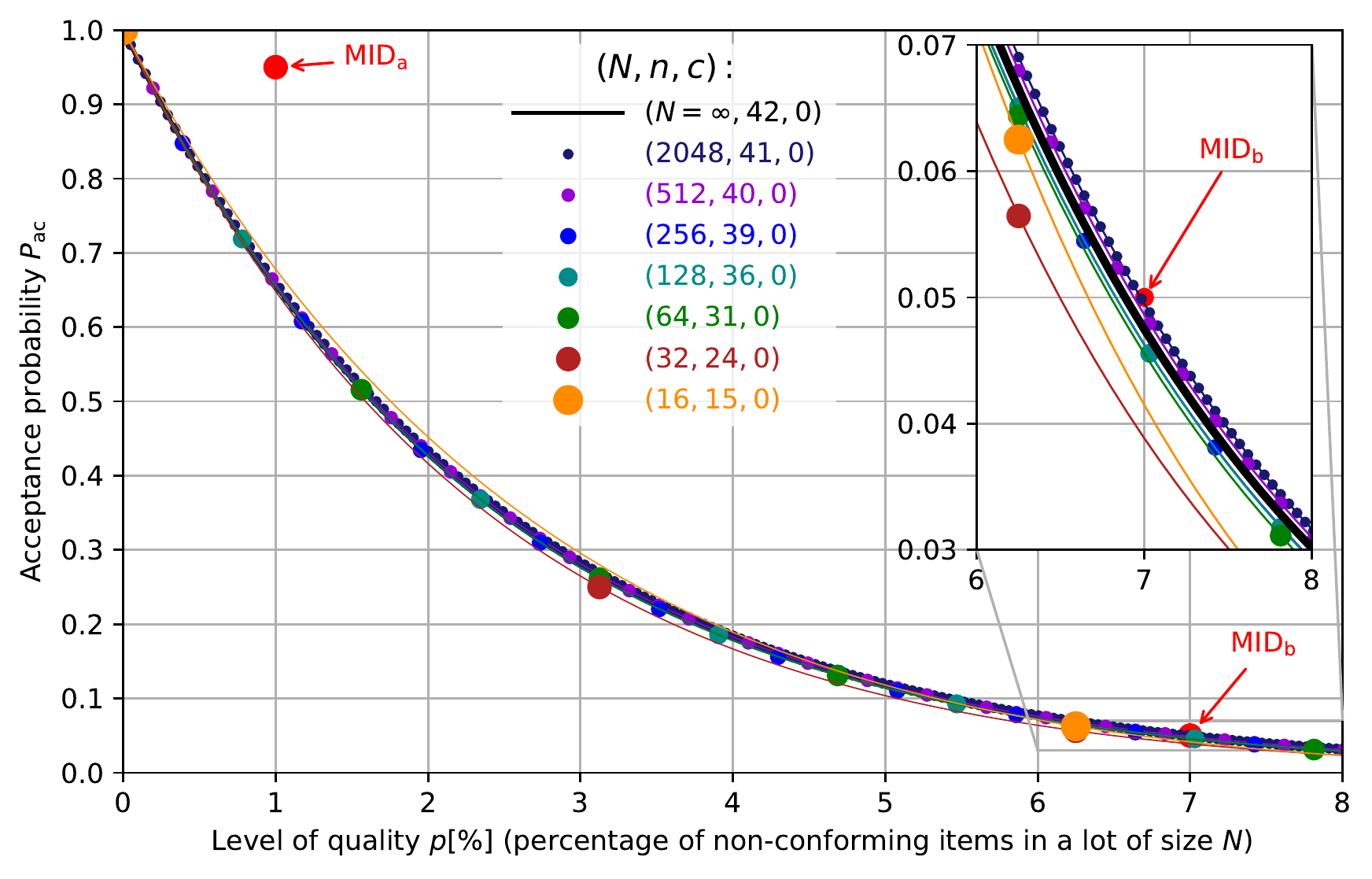}
\caption{Acceptance probability $\Pac(pN,N,n,0)$ of minimal zero-acceptance single sampling plans admissible under MID conditions, for lot sizes $N$ following a geometric progression, plotted as function of the lot's quality level or fraction defective $p=M/N$.    
Filled colored dots correspond to the hypergeometric distribution \eqref{hypergeomc0}; connecting curves are guides to the eye by the extension  \eqref{analyticcont}. The full black line is the binomial sampling model for $(n=42,c=0)$ reached in the limit $N\to\infty$. 
} 
\label{figN_c0_OCs_Nn}  
\end{figure*}

\begin{figure*}
\centering
\includegraphics[width=.9\textwidth]{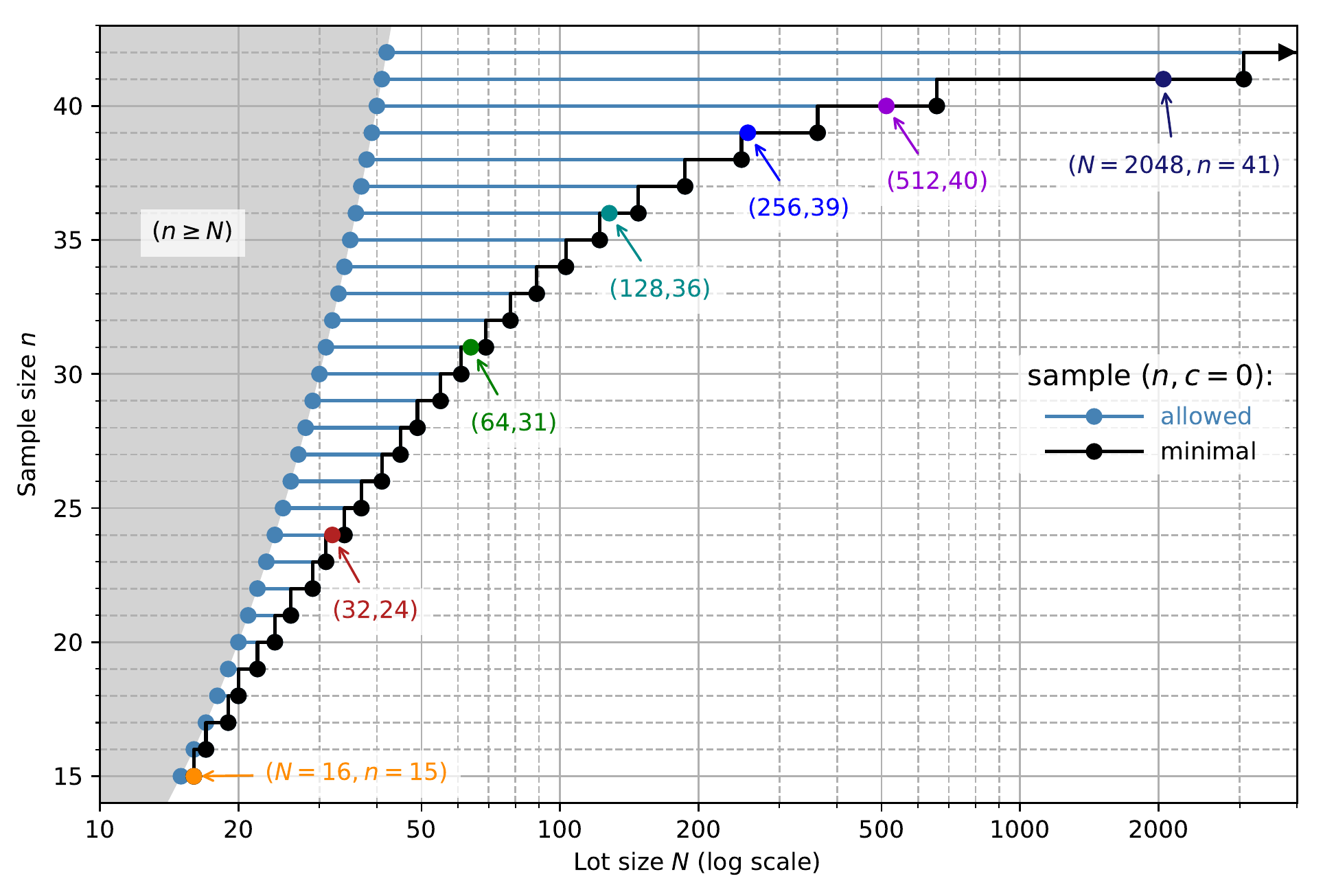}
\caption{%
MID-compatible sample size $n$ for zero-acceptance single sampling as function of lot size $N$. 
For $1\leq N\leq 15$, only 100\%  testing $n=N$ is admissible (see text). Between $N=16$ and $N=3063$, the minimum sample size grows slowly from $n=15$ to $n=41$, as listed in Table \ref{tab_nmin_N_c0}. 
From $N\geq 3064$ onward, the mimimum sample size saturates at $n=42$ already known from the binomial model (compare Fig.~\ref{fig_OC_opt_MID_ISO_matrix}). Colored dots correspond to the OCs plotted in Fig.~\ref{figN_c0_OCs_Nn}.
} 
\label{fig_n_N1N1_c0}  
\end{figure*} 

If the lot size is not much larger than the sample size then finite-size corrections become noticeable, and the results of the previous section have to be revisited \cite{SchillingNeubauer2017}. Let us consider a sample of $n$ items drawn without replacement from a lot of size $N$ containing $M\in\{0,1,\dots,N\}$ defective items.  
Under the single-sampling paradigm, the lot is to be accepted if the sample contains at most $c$ defective items. The acceptance probability is then given by the cumulative density of the hypergeometric distribution,
\be
\Pac(M,N,n,c) = \sum_{k=0}^c \frac{\binom{M}{k}\binom{N-M}{n-k}}{\binom{N}{n}},  
\label{hypergeom} 
\ee 
where each summand is the combinatorial probability to find exactly $k=0,1,\dots,c$ defective items among the $n$ items tested. 
One can assign the quality level $p=M/N$ to this lot and discuss its acceptance probability $\Pac(pN,N,n,c)$ for fixed $N$ as function of the operationally meaningful values $p\in \mathcal{D}_N$, where 
\be \label{DN.def}
\mathcal{D}_N=\{0,\tfrac{1}{N},\tfrac{2}{N},\dots,\tfrac{N-1}{N},1\}.   
\ee
is the domain of the function $\Pac(\cdot,N,n,c):p\in\mathcal{D}_N\mapsto\Pac(p,N,n,c)\in\Pac(\mathcal{D}_N)$.

From this general setting follows
\begin{prop} 
A single-sampling plan with acceptance number $c$ can only be admissible under the MID condition \eqref{MID3} for lot sizes 
\be\label{lowerboundN}
 N > 100 c. 
\ee
\end{prop}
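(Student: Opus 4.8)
The plan is to exploit the trivial but decisive fact that the hypergeometric OC saturates at its maximal value $1$ as long as the lot contains no more defectives than the acceptance number permits. Concretely, I would first record the plateau lemma: if the lot contains $M\le c$ defective items, then any sample of $n$ items can contain at most $M\le c$ defectives, the acceptance criterion is met with certainty, and hence $\Pac(M,N,n,c)=1$. In terms of the quality level $p=M/N$ this says the OC is pinned to $\Pac=1$ on the whole initial plateau $p\le c/N$, and can start to decrease only once $M=c+1$.

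Next I would locate the AQL on the lot's quality axis. Condition \eqref{MID3} is evaluated at $\pa=0.01$, which corresponds to $M_a=\pa N=N/100$ defective items. When $100\mid N$ this is an attainable point, $\pa\in\mathcal{D}_N$, and since \eqref{MID3} demands $\Pac(\pa)<0.95<1$, the plateau lemma forces the AQL to lie strictly beyond the plateau, $M_a>c$. Substituting $M_a=N/100$ gives $N/100>c$, equivalently $N>100c$, which is the assertion. The strictness is automatic: at $N=100c$ one has $M_a=c$, still on the plateau, so $\Pac(\pa)=1$ and \eqref{MID3} fails.

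The only point requiring care --- and the main obstacle --- is that $\pa=0.01$ need not be an attainable quality level, since $M_a=N/100$ is an integer only when $100\mid N$; otherwise $\pa\notin\mathcal{D}_N$ and $\Pac(\pa)$ must be read off the continuous extension \eqref{analyticcont}. I would dispose of this by arguing the contrapositive. Suppose $N\le 100c$, so $N/100\le c$. The two neighbouring attainable defective counts then satisfy $\lfloor N/100\rfloor\le N/100\le c$ and, because $c$ is an integer, also $\lceil N/100\rceil\le c$; hence the OC equals $1$ at both domain points bracketing $\pa$. Since the extension \eqref{analyticcont} is monotone and interpolates between neighbouring OC values, its value at $\pa$ is squeezed to $1$ as well, so $\Pac(\pa)=1$ and the plan is inadmissible. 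Contraposition then yields $N>100c$ for every admissible plan, independently of any divisibility of $N$ by $100$.
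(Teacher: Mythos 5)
Your plateau lemma and your treatment of the divisible case are exactly the paper's argument: a lot with $M\le c$ defectives is accepted with certainty, so admissibility under \eqref{MID3} forces the quality level $c/N$, where $\Pac=1$, to lie strictly below $\pa=1/100$, i.e.\ $N>100c$. In fact the paper's entire proof is just this: it evaluates the OC at the \emph{attainable} level $p=c/N$ and notes that if $c/N\ge\pa$ then \eqref{MID3} cannot hold, since any attainable quality level at or above $\pa$ must have acceptance probability below $0.95$ (the monotone reading \eqref{MID_finiteN_2}). That settles the claim for every $N$, with no case distinction on whether $100$ divides $N$, and without ever evaluating $\Pac$ at $\pa$ itself.

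Your third paragraph, by contrast, contains a false step. The gamma-function extension \eqref{analyticcont} is \emph{not} monotone between neighbouring integer values of $M$ on the plateau, and it does not interpolate their values: it equals $1$ at each integer $M\le c$ but overshoots strictly in between. A minimal counterexample is $N=3$, $n=2$, $c=1$, $M=1/2$, for which
\[
\Pac \;=\; \frac{\binom{5/2}{2}+\binom{1/2}{1}\binom{5/2}{1}}{\binom{3}{2}} \;=\; \frac{15/8+5/4}{3} \;=\; \frac{25}{24} \;>\; 1,
\]
so the claimed ``squeeze to $1$'' does not go through as argued. In this instance the failure is harmless for your conclusion (the value still exceeds $0.95$, so inadmissibility follows), but your reasoning provides no control in general: you would need a lower bound $\Pac(M_\text{a})\ge 0.95$ everywhere on the plateau for arbitrary $(N,n,c)$, which you have not established. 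The repair is simple and is the paper's route: drop the evaluation at $\pa$ altogether and argue at the integer point $M=c$ --- if $N\le 100c$, the attainable level $p=c/N\ge\pa$ satisfies $\Pac=1$, contradicting the admissibility requirement directly.
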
 
\begin{proof} 
A lot containing at most $M=c$ defective items will certainly be accepted by any sample plan $(n,c)$. But then the corresponding quality level $p=\frac{c}{N}$ (where $\Pac=1$) must be smaller than the AQL $\pa$ since otherwise condition \eqref{MID3} cannot be satisfied. Therefore, $\frac{c}{N} < \pa=\frac{1}{100}$, which is equivalent to \eqref{lowerboundN}. 
\end{proof}

\begin{table*}\centering
\setlength{\tabcolsep}{0pt}
\rowcolors{3}{white}{gray!25}
\begin{tabular}{*{2}{C{2.75em}}|*{2}{C{2.25em}}|*{2}{C{2cm}}|*{2}{C{2cm}}}
\multicolumn{2}{c|}{Lot size $N$}	& 
    \multicolumn{2}{c|}{Sample}& 	
    \multicolumn{2}{c|}{Producer's risk $\alpha$ [\%]} & 
    \multicolumn{2}{c}{Consumer's risk	$\beta $ [\%]}\\
from & to 	&$n$	&	$c$ & from & to& from & to \\
\hline  
  15 &   16 &   15 &   0 &     (40.37) &    (32.21) &     0.00 &     4.15\\ 
  17 &   17 &   16 &   0 &     (34.53) &    (34.53) &     3.00 &     3.00\\ 
  18 &   19 &   17 &   0 &     (36.82) &    (32.66) &     2.13 &     4.46\\ 
  20 &   20 &   18 &   0 &     (34.73) &    (34.73) &     3.42 &     3.42\\ 
  21 &   22 &   19 &   0 &     (36.77) &    (33.94) &     2.61 &     4.05\\ 
  23 &   24 &   20 &   0 &     (35.84) &    (33.74) &     3.19 &     4.31\\ 
  $\vdots$&$\vdots$&$\vdots$&$\vdots$&$\vdots$&$\vdots$&$\vdots$&$\vdots$\\
  56 &   61 &   30 &   0 &     (34.76) &    (33.70) &     4.36 &     4.94\\ 
  62 &   69 &   31 &   0 &     (34.81) &    (33.66) &     4.36 &     4.99\\ 
  70 &   78 &   32 &   0 &     (34.71) &    (33.71) &     4.43 &     4.99\\ 
  79 &   89 &   33 &   0 &     (34.72) &    (33.77) &     4.45 &     4.98\\ 
  90 &  103 &   34 &   0 &     (34.73) &    33.81 &     4.47 &     4.97\\ 
 104 &  122 &   35 &   0 &     34.74 &    33.83 &     4.48 &     4.98\\ 
 123 &  148 &   36 &   0 &     34.71 &    33.85 &     4.51 &     4.99\\ 
 149 &  187 &   37 &   0 &     34.70 &    33.86 &     4.54 &     5.00\\ 
 188 &  248 &   38 &   0 &     34.66 &    33.89 &     4.57 &     5.00\\ 
 249 &  363 &   39 &   0 &     34.66 &    33.91 &     4.59 &     5.00\\ 
 364 &  659 &   40 &   0 &     34.65 &    33.93 &     4.61 &     5.00\\ 
 660 & 3063 &   41 &   0 &     34.63 &    33.95 &     4.63 &     5.00\\ 
3064 & $\infty$ &   42 &   0 &     34.62 &    34.4 &     4.65 &     4.75\\ 
\end{tabular} 
\caption{%
Minimum sample size $n$ allowed by the MID for zero-acceptance single sampling as function of lot size $N$, as shown in Fig.~\ref{fig_n_N1N1_c0}, together with the producer's and consumer's risk. 
Figures in parentheses correspond to lot sizes $N<100$, where a single defective item already defines a quality level $p=1/N > \pa$ such that the producer's risk $\alpha=1-\Pac(\pa)$ has no operational meaning. 
Since the sample plans are optimised with respect to the MID conditions, the risk data stray only slightly from the values obtained in the limit $N=\infty$, taken from the binomial model of Table \ref{tab_nc_opt_MID_ISO_binom}.
Entries ``$5.00$'' arise due to rounding from the true values, which obey the strict inequality \eqref{MID4}. 
} 
\label{tab_nmin_N_c0}
\end{table*}  

Analogously to the optimisation of sample size in Section \ref{sec:Ninfty}, one can further determine the smallest sample size $n$, given acceptance number $c$ and lot size $N>100c$, that is admissible under the MID conditions. 
We choose to use the criteria \eqref{MID3} and \eqref{MID4}, namely checking whether the acceptance probability \eqref{hypergeom} 
at $M_\text{a}=\pa N$ and $M_\text{b} = \pb N$  is inferior to the bounds $\Pa$ and $\Pb$, respectively.  
This requires the extension of the factorials in \eqref{hypergeom} to non-integer arguments, which is easily achieved using the gamma function \cite{AbramovitzStegun}: 
\be\label{analyticcont}
x!=\Gamma(x+1)=\int_0^\infty t^x e^{-t} dt. 
\ee
While the notion of non-integer values for $M_\text{a}=\pa N$ and $M_\text{b} = \pb N$ is not evident to justify operationally for single lots, this procedure turns out to be more consistent than a purely discrete formulation; for a detailed justification see Appendix \ref{app:MID_finiteN}.    

We find that the manner in which finite lot size affects MID-optimised sample plans depends crucially on the acceptance number. 
Details for the most relevant cases $c=0,1,2$ are discussed in the following subsections \ref{finiteN_c0.sec} through \ref{finiteN_c2.sec}. 
Readers mainly interested in the final results are invited to consult Sec.~\ref{sec:summary_finiteN} straight away.

\subsection{Zero acceptance $c=0$}
\label{finiteN_c0.sec}

For $c=0$, the binomial prediction for very large lots ($N=\infty$) is conservative in the sense that it overestimates the sample size that is really necessary for a lot of a certain finite size $N$ \cite{SchillingNeubauer2017}. 
Thus, $(50,0)$ from ISO 2859-1 has been correctly identified as being compatible with MID for lot sizes from $51\leq N\leq 500$ \cite{Welmec8.10}.  
Similarly, the single-sampling plan $(42,0)$ is certainly admissible, all the way down to $N=43$. 
However, taking into account finite lot sizes allows us to reduce the required sample sizes even further. 

Let us first explain qualitatively why smaller lots require smaller zero-accceptance samples. 
In the binomial model, the probability $p$ to draw defective items without replacement from the lot is taken constant. However, if a lot of size $N$ contains $M$ defective items, then the fraction defective is  $p=M/N$ only for the first draw.  
The probabilities of the second draw depend on the outcome of the first. 
If the first item is defective, then the second item is defective with probability $p'=(M-1)/(N-1)$, which is smaller than $p$ (we can assume $M<N$, since for $M=N$ one has $p=p'=1$, a trivial case without practical interest because all lots are rejected anyway). 
Vice versa, if the first item is conforming, then $p'=M/(N-1)$, which is larger than $p$ (we assume $M>0$, otherwise we have $p=p'=0$, again a trivial case where all lots are accepted). 
The latter case $p'>p$ is more frequent since $p\ll 1$ in realistic settings. 
This evolution of the probability to larger values will likely continue for each draw, so that the actual chance to discover defective items in a lot of small size is larger than predicted by the binomial model. 
By consequence, the actual acceptance probability for the same sample size would be smaller, such that a smaller sample actually suffices to stay below the MID bounds.

For $c=0$, the general expression \eqref{hypergeom} for the acceptance probability simplifies somewhat, 
\be
\Pac(M,N,n,0) = \frac{(N-M)!(N-n)!}{N!(N-M-n)!}. 
\label{hypergeomc0} 
\ee
With the help of the factorial extension \eqref{analyticcont}, one can determine numerically the smallest sample size $n$, given the lot size $N$, that still fulfills both MID conditions. It turns out that 
actually only the LQ condition \eqref{MID4} at \MIDb{} matters.
Figure \ref{figN_c0_OCs_Nn} shows the result of such a minimisation, for lot sizes $N$ growing in geometric progression toward the limit $N=\infty$ where the binomial result $n=42$  of Section~\ref{sec:Ninfty} becomes exact. 
It is evident that sample sizes can be substantially reduced compared to the binomial model for small to moderate lot sizes.

In theory, the smallest lot size for which statistical sampling can be envisaged under the MID conditions is $N=15$. 
The reason is that for $1\leq N\leq 14$, the quality level of a lot with a single defective item is at least $1/14\approx7.143\%$ and thus already larger than $\pb=7\%$ that should be rejected. 
Therefore,  the only way to ensure MID conditions for $N\leq14$ is 100\% testing with $n=N$. 
The next larger lot size $N=15$ is such that a sample size of $n=14$ violates condition \eqref{MID4}.  
Thus, also $N=15$ requires 100\%  testing with $n=15$. 
The combination $N=16$ and $n=15$, however, is compatible with the MID conditions, as shown by the corresponding OC in Fig.~\ref{figN_c0_OCs_Nn}. 
For $N=17$, one has to step up to $n=16$, and so on, 
all the way up to $N=3063$ and $n=41$. From $N\geq 3064$ onward, the required sample size saturates at $n=42$ already derived from the binomial model (cf.~Fig.~\ref{fig_OC_opt_MID_ISO_matrix}). Figure \ref{fig_n_N1N1_c0} shows the allowed and minimum sample size $n$ found under MID conditions as function of the lot size $N$. 

The corresponding lot-size intervals with their minimum sample sizes are listed in Table \ref{tab_nmin_N_c0}, together with the producer's and consumer's risks. Since the sample size is optimised with respect to the MID conditions (here, for $c=0$, only the consumer's point \MIDb{} matters), these data vary only slightly from one case to the other. The main message is that sample size can be substantially reduced, under very similar risks, for finite lot sizes all the way down to $N=16$.

\subsection{Unit acceptance $c=1$}
\label{finiteN_c1.sec}

\begin{figure*}\centering
\includegraphics[width=.9\textwidth]{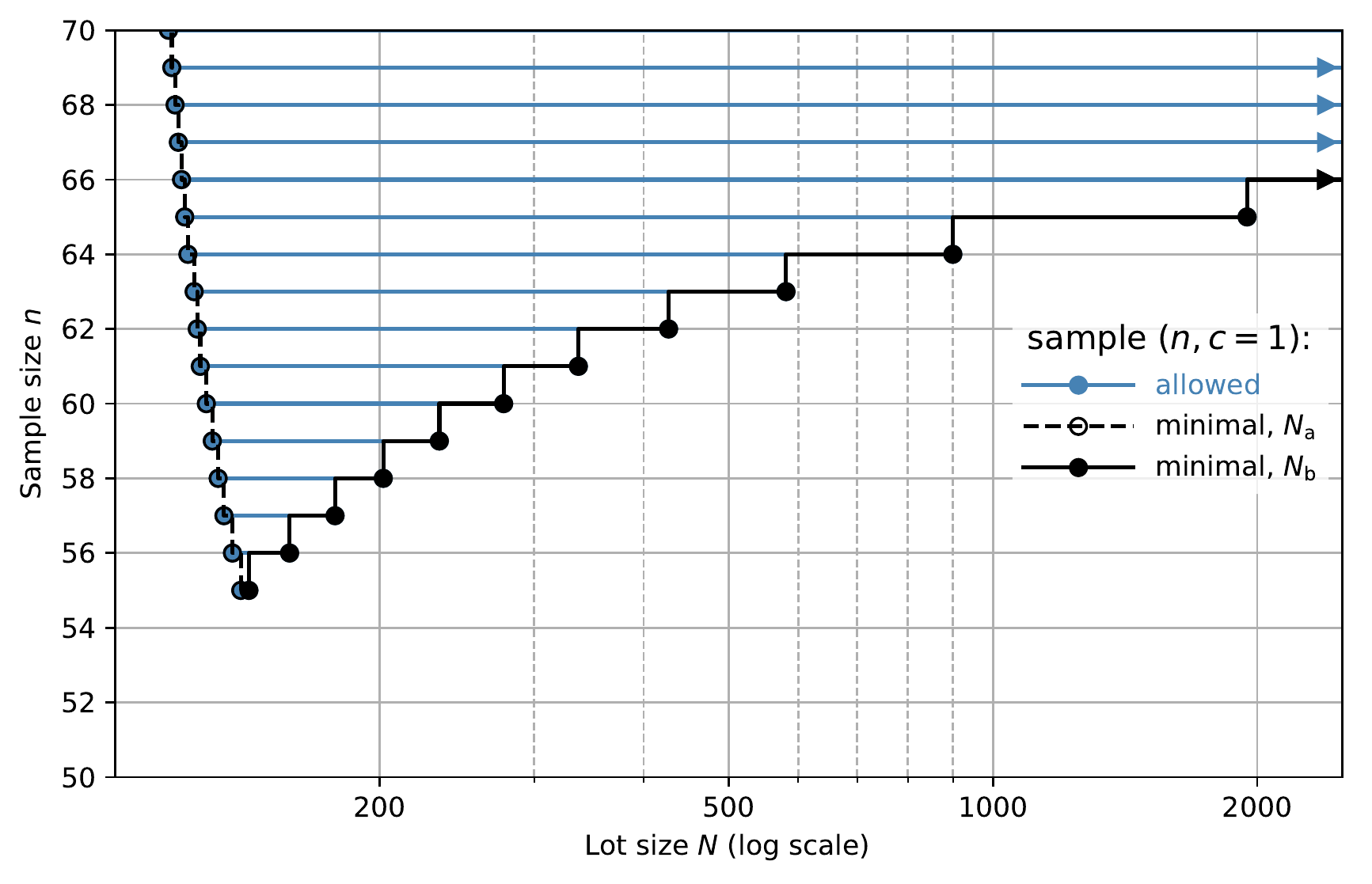}
\caption{%
MID-allowed sample size $n$ for unit-acceptance $(c=1)$ single sampling as function of lot size $N$. 
The smallest (largest) allowed lot size $N_\text{a}$ ($N_\text{b}$) for each sample size $n$ is determined by the point \MIDa{} (\MIDb{}). 
Black lines and dots indicate minimum sample sizes on the `a' and `b' side, respectively. 
}   
\label{fig_n_N1N2_c1}  
\end{figure*}

\begin{table*}\centering
\setlength{\tabcolsep}{0pt}
\rowcolors{3}{gray!25}{white}
\begin{tabular}{*{2}{C{4em}}|*{2}{C{2.25em}}|*{2}{C{2cm}}|*{2}{C{2cm}}}
\multicolumn{2}{c|}{Lot size $N$}	& 
    \multicolumn{2}{c|}{Sample}& 	
    \multicolumn{2}{c|}{Producer's risk $\alpha$ [\%]} & 
    \multicolumn{2}{c}{Consumer's risk	$\beta $ [\%]}\\
from $N_\text{a}$ & to $N_\text{b}$ 	&$n$	&	$c$ & from & to& from & to \\
\hline 
139 &  142 &   55 &   1 &      5.07 &     5.26 &     4.88 &     4.98\\ 
 136 &  158 &   56 &   1 &      5.05 &     6.37 &     4.33 &     4.98\\ 
 133 &  178 &   57 &   1 &      5.02 &     7.41 &     3.83 &     4.99\\ 
 131 &  202 &   58 &   1 &      5.04 &     8.35 &     3.39 &     4.99\\ 
 129 &  234 &   59 &   1 &      5.05 &     9.26 &     2.98 &     4.99\\ 
 127 &  277 &   60 &   1 &      5.04 &    10.10 &     2.61 &     5.00\\ 
 125 &  337 &   61 &   1 &      5.00 &    10.88 &     2.27 &     5.00\\ 
 124 &  427 &   62 &   1 &      5.08 &    11.62 &     1.99 &     5.00\\ 
 123 &  581 &   63 &   1 &      5.14 &    12.31 &     1.74 &     5.00\\ 
 121 &  900 &   64 &   1 &      5.05 &    12.97 &     1.48 &     5.00\\ 
 120 & 1947 &   65 &   1 &      5.09 &    13.59 &     1.28 &     5.00\\ 
 119 & $\infty$ &   66 &   1 &      5.12 &    14.1   &    1.10 &     4.96\\ 
 \end{tabular}                   
\caption{%
Sample size $n$ for unit-acceptance $(c=1)$ single sampling in the lot-size intervals $[N_\text{a},N_\text{b}]$ allowed by MID, together with the producer's and consumer's risk. 
At $N_\text{a}$, the producer's risk is optimally close to (and just above) $\Pa=5\%$; conversely, at $N_\text{b}$, the consumer's risk is optimally close to (and just below) $\Pb=5\%$. For $N=\infty$ the data is taken from the binomial model (cf.~Table \ref{tab_nc_opt_MID_ISO_binom}).
} 
\label{tab_n_N1N2_c1}
\end{table*}  

Just as for $c=0$, one can minimize the sample size $n$ for a given $N$ with fixed $c=1$ such that the MID conditions are fulfilled. 
In contrast to the case $c=0$, now also the producer's point \MIDa{} plays a significant role, which brings about a qualitative difference in the behavior of the OCs as function of system size $N$. 
Indeed, Eq.~\eqref{lowerboundN} of Proposition 1 tells us already that 
the lot size is globally bounded from below by $N>100$.
The precise value of this lower bound depends on the acceptance probability at the AQL point \MIDa{} and thus on the sample size $n$.   
Conversely, the lot size is bounded from above, at fixed sample size, by the behaviour at the LQ point \MIDb{}.  
Figure \ref{fig_n_N1N2_c1} shows the resulting, allowed combinations of sample size $n$ and lot sizes $N\in[N_\text{a},N_\text{b}]$.  
No MID-compatible sampling with $n<55$ is possible.
As shown by the `a'  branch on the left side, the \MIDa{} point requires the minimum sample size to grow sharply when the sample size decreases toward the absolute lower bound \eqref{lowerboundN} at $N=100$.  For all larger lot sizes $N\geq 1948$, the sampling $(66,1)$ already known from the binomial model becomes optimal.

Table \ref{tab_n_N1N2_c1} shows the quantitative risk data associated with these intervals.  By construction, for the smallest allowed lot size $N_\text{a}$ in each row (the ``from'' case), the producer's risk is optimally close to (and just above) the limit 5\% imposed by the AQL point \MIDa{}; conversely, for the largest allowed lot size (the ``to'' case), the consumer's risk is optimally close to (and just below) the 5\% limit imposed by the LQ point \MIDb{}.

\subsection{Double acceptance $c=2$ }
\label{finiteN_c2.sec}

\begin{figure*}\centering
\includegraphics[width=0.9\textwidth]{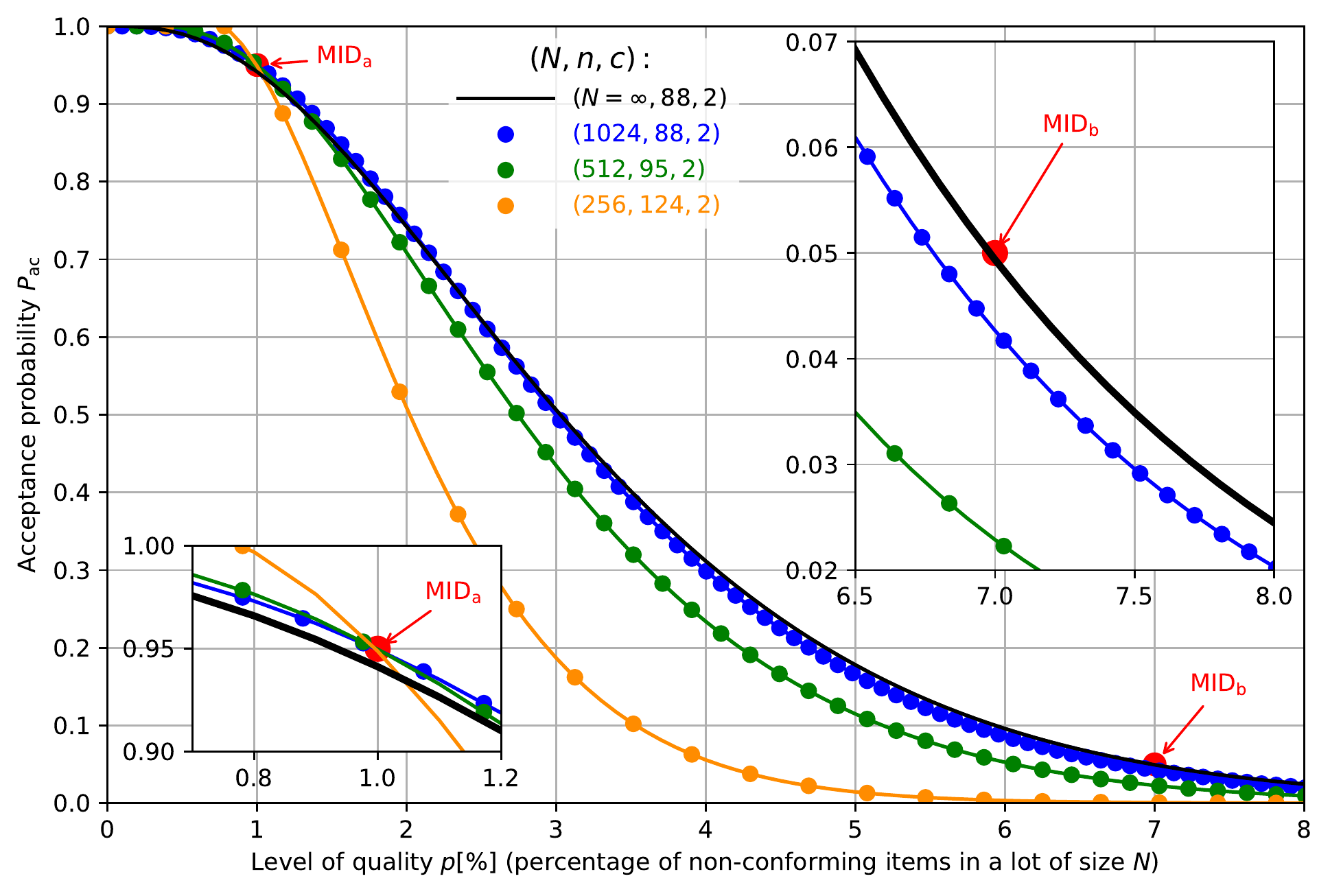}
\caption{%
Operating characteristics for lot sizes $N=256,512,1024$ and their minimal double-acceptance $(c=2)$ single sampling plans under MID conditions. 
 As shown in the inset on the lower left, now the AQL condition \MIDa{} becomes increasingly hard to satisfy for smaller lots, such that the required sample size has to increase quite dramatically. The exaggerated steepness of the OC curves for smaller lots indicates that in those cases sampling with a lower acceptance number ($c\leq1$) is more appropriate.  
} 
\label{fig10}  
\end{figure*}

For $c=2$, now the AQL point \MIDa{} mainly determines the allowed combinations of lot size $N$ and sample size $n$. For an illustration, Figure \ref{fig10} shows the OCs of 3 different sample sizes $N=256,512,1024$ with their minimum sample sizes $n=124,95,88$ determined by the \MIDa{} condition. 
Since the binomial limit $(88,2)$ is conservative in the sense that the acceptance probability \eqref{hypergeom} at $\pa$ \emph{increases} as lot size $N$ decreases, then the minimum sample size has to increase as well for smaller lots in order to compensate this effect. 
This one-sided constraint makes the \MIDb{} point increasingly irrelevant as $N$ becomes smaller, as evident from the OC for the smallest lot size $N=256$ plotted in Fig.~\ref{fig10}. 
The rapid decrease of the OC indicates that the acceptance number $c=2$ is too elevated for such a small lot size, suggesting instead to fall back onto  $c=1$ or even  $c=0$.

\begin{figure*}\centering
\includegraphics[width=0.9\textwidth]{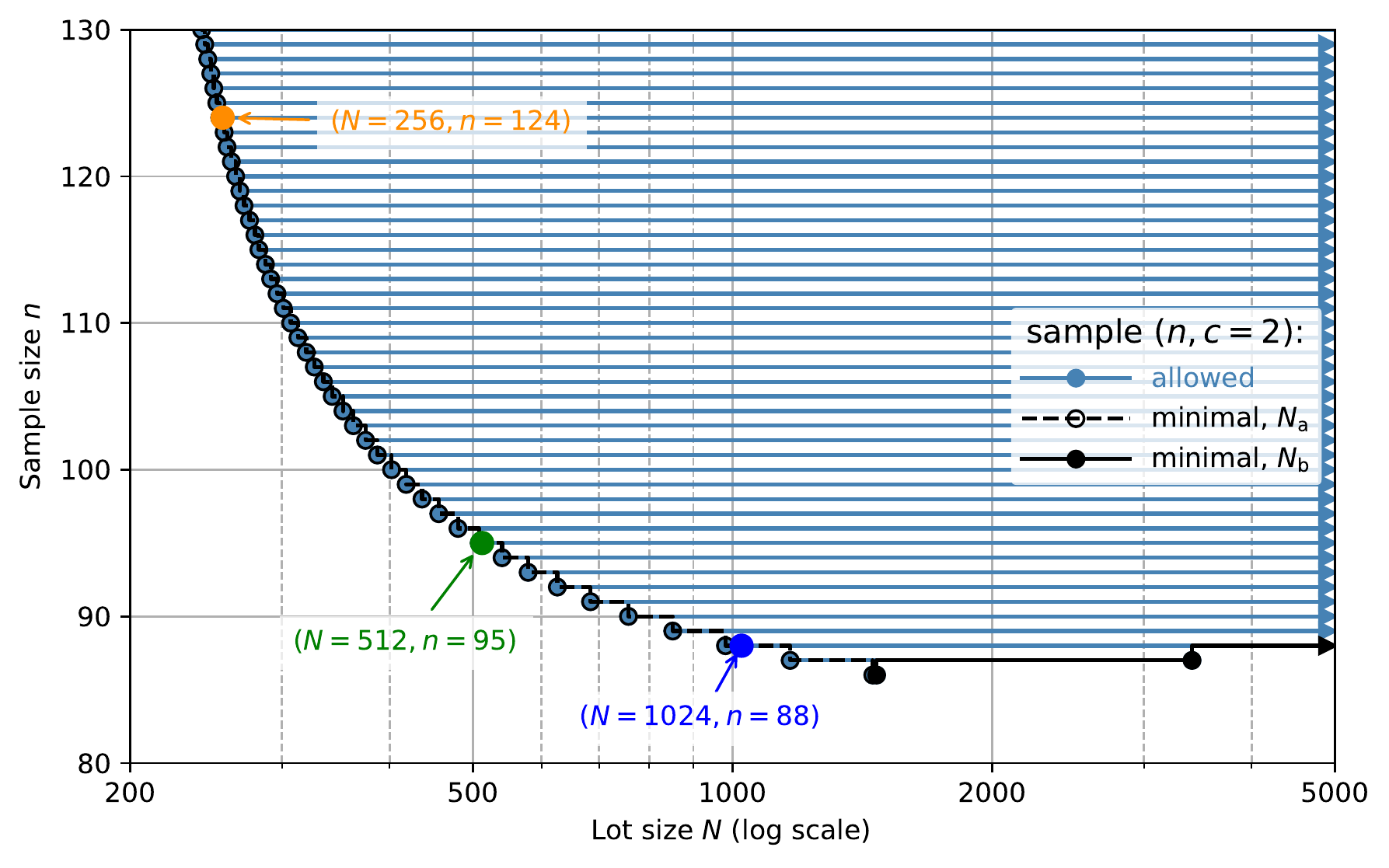}
\caption{%
MID-allowed sample size $n$ for double-acceptance $(c=2)$ single sampling as function of lot size $N$.  
The smallest (largest) allowed lot size $N_\text{a}$ ($N_\text{b}$) for each sample size $n$ is determined by the point \MIDa{} (\MIDb{}). Here the main factor is the \MIDa{} condition, only the samples $n=86,87$ are limited to finite lot sizes by the \MIDb{} condition. The binomial result $(88,2)$ remains valid for arbitrarily large lots.  
The 3 colored dots correspond to the OCs plotted in Fig.~\ref{fig10}.
}   
\label{fig_n_N1N2_c2}  
\end{figure*} 

Figure \ref{fig_n_N1N2_c2} shows the allowed combinations of lot and sample sizes. The minimum sample size found from the binomial model, $n=88$, is admissble only down to $N=981$. 
For smaller lots, the minimum sample size has to increase quite dramatically in order to satisfy the \MIDa{} condition. The \MIDb{} constraint only has a marginal influence: it limits the validity of the smallest possible sample sizes $n=86,87$ to $N_\text{b}=1469, 3412$, respectively.  
The optimal binomial result $(n=88,c=2)$ is valid up to arbitrarily large lots, of course.

\begin{table*}\centering
\setlength{\tabcolsep}{0pt}
\rowcolors{3}{gray!25}{white}
\begin{tabular}{*{2}{C{4em}}|*{2}{C{2.25em}}|*{2}{C{2cm}}|*{2}{C{2cm}}}
\multicolumn{2}{c|}{Lot size $N$}	& 
    \multicolumn{2}{c|}{Sample}& 	
    \multicolumn{2}{c|}{Producer's risk $\alpha$ [\%]} & 
    \multicolumn{2}{c}{Consumer's risk	$\beta $ [\%]}\\
from $N_\text{a}$ & to 	 $N_\text{b}$ &$n$	&	$c$ & from & to& from & to \\
\hline 
1454 & 1469      & 86   & 2   &      5.00 &     5.01 &     4.99 &       5.00\\
1166 & 3412      & 87   & 2   &      5.00 &     5.48 &     4.60 &       5.00\\
981 &  $\infty$  &   88 &   2 &      5.00 &     5.87 &     4.24 &       4.94\\ 
 852  &  $\infty$&   89 &   2 &      5.00 &     6.03 &     3.90 &       4.68\\ 
 757  &  $\infty$&   90 &   2 &      5.00 &     6.19 &     3.58 &       4.44\\ 
 684  &  $\infty$&   91 &   2 &      5.00 &     6.36 &     3.28 &       4.21\\ 
 626  &  $\infty$&   92 &   2 &      5.00 &     6.53 &     3.00 &       3.99\\ 
 579  &  $\infty$&   93 &   2 &      5.00 &     6.70 &     2.74 &       3.78\\ 
 540  &  $\infty$&   94 &   2 &      5.00 &     6.87 &     2.50 &       3.58\\ 
 508  &  $\infty$&   95 &   2 &      5.00 &     7.04 &     2.28 &       3.39\\ 
 480  &  $\infty$&   96 &   2 &      5.00 &     7.22 &     2.07 &       3.21\\ 
 $\vdots$ &&&&&&&\\
 257  &  $\infty$&  123 &   2 &     5.03  &     12.62     &0.08  &   0.70\\ 
 254  &  $\infty$&  124 &   2 &      5.00 &    12.84 &     0.07 &     0.66\\ 
 252  &  $\infty$&  125 &   2 &      5.02 &    13.07 &     0.06 &     0.62\\ 
 $\vdots$ &&&&&&&
 \end{tabular}  
\caption{%
MID-compatible lot and sample sizes for double-acceptance 
($c=2$) single sampling, together with producer's and consumer's risk. 
The minimum sample size $n=88$ found within the binomial model is also the minimum sample size found here, valid for  $N\geq981$. 
Smaller lots require larger samples.
For $N<500$, the sample size rises substantially while the consumer's risk drops far below the MID bound of $5\%$, indicating that unit-acceptance ($c=1$) sampling is more appropriate.          
} 
\label{tab_n_N1N2_c2}
\end{table*}

Table \ref{tab_n_N1N2_c2} lists the allowed sample size as function of the lot size intervals, together with the producer's and consumer's risks. 
It is clear that for smaller lots, double-acceptance sampling is no longer the best choice for a fair realization of the MID criteria. 
For example, a lot of size $N=256$ requires the minimum double-acceptance sampling $n=124$ and features only a consumer's risk of $\beta=0.07\%$, two orders of magnitude smaller than required by \MIDb{}. 
An arguably better choice then is $c=1$, for which it can be found in Table \ref{tab_n_N1N2_c1} that the smallest allowed unit-acceptance sampling $(60,1)$ has  $\alpha\approx 10\%$ and $\beta \approx 5\%$.

\subsection{Summary of results for finite lot sizes} 
\label{sec:summary_finiteN}

\begin{figure*}\centering
\includegraphics[width=0.9\textwidth]{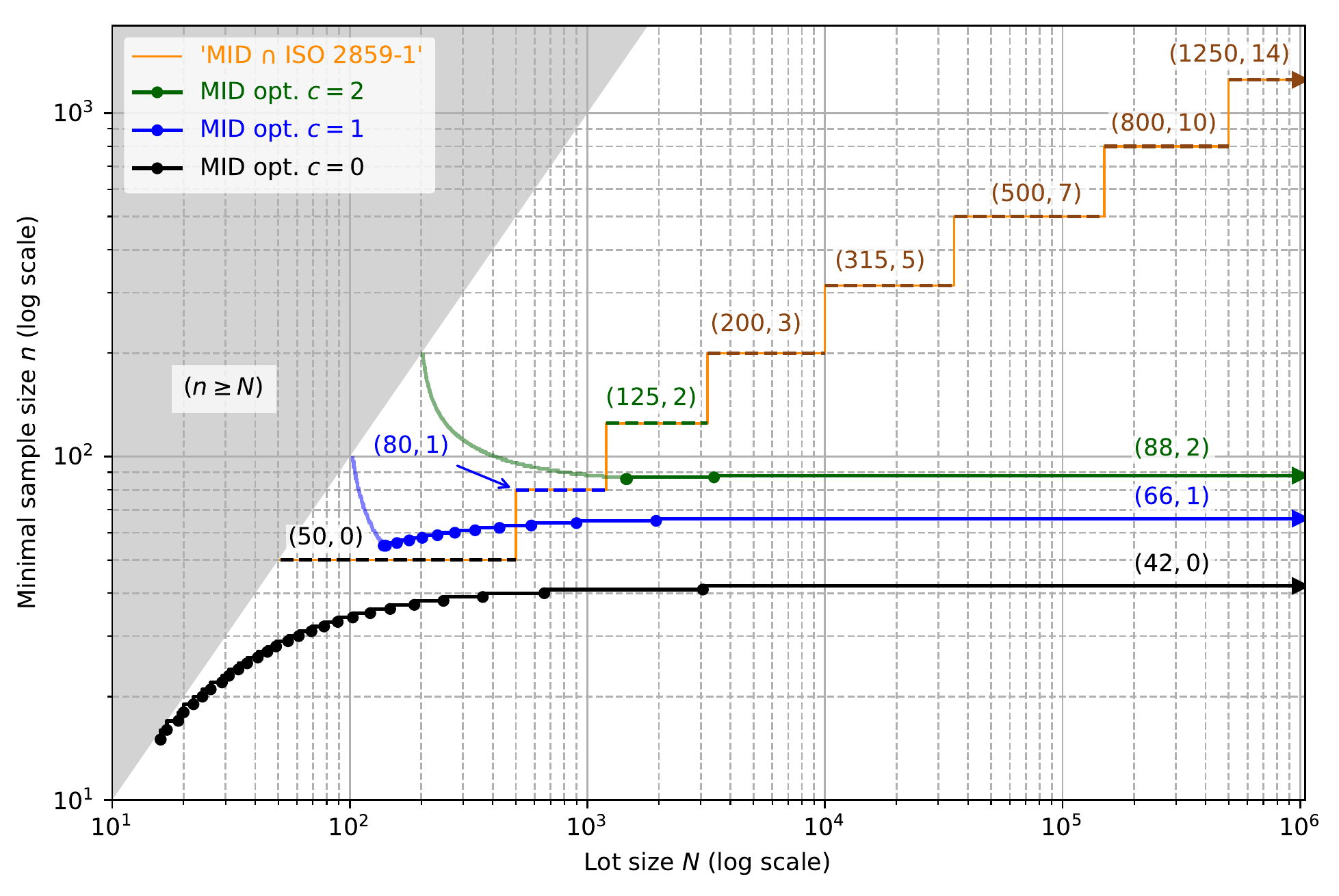}
\caption{
Combined view of MID-minimised single-sampling size $n$ as function of lot size $N$ for acceptance numbers $c=0,1,2$, together with the plans of ISO 2859-1  recommended by \cite{Welmec8.10} (steps), on a double-logarithmic scale. 
}   
\label{fig_nmin_cALL_withISO}  
\end{figure*}

Figure \ref{fig_nmin_cALL_withISO} summarizes the impact of finite lot sizes on the MID-optimised sampling scheme with acceptance number $c=0,1,2$, covering several orders of magnitude on a double logarithmic scale. The main results are: 
\begin{enumerate} 
\item The minimal sample plans $(n,c)$ obtained within the binomial model for large lots in Sec.~\ref{sec:Ninfty}, $(42,0)$, $(66,1)$, and $ (88,2)$, are indeed the minimal sample plans for all lot sizes $N > N_c$ where $N_0=3063$, $N_1=1947$, and $N_2= 3412$. 
Finite-lot-size corrections are of two different types, dictated by the two different MID conditions. Their respective effects appear depending on the acceptance number: 

\item Acceptance number $c=0$:  
The minimal binomial sampling plan $(42,0)$ is conservative in the sense that it remains formally admissible down to $N=43$. 
But for lot sizes smaller than $N_0=3063$, smaller samples are possible because it becomes easier to satisfy the only relevant condition \eqref{MID4} at LQ $\pb=7\%$. 
The minimum sample size as function of lot size is listed in Table \ref{tab_nmin_N_c0} and shown in Figs.~\ref{fig_n_N1N1_c0} and  \ref{fig_nmin_cALL_withISO}. 

\item Acceptance number $c=1$: The minimal binomial result $(66,1)$ is no longer globally conservative. 
Certainly, below $N_1=1947$, the necessary sample size first decreases for smaller lots because it becomes easier to satisfy the LQ condition \eqref{MID4}. 
But below a lot size of $N=139$, where the smallest possible sample size is $n=55$, now the AQL condition \MIDa{} takes over. 
It requires the sample size to grow quite sharply with decreasing lot size in order to ensure an acceptance probability at $\pa$ below $95\%$. 
This sharp upturn continues down to $N=100$, where the global lower bound \eqref{lowerboundN} is reached. 
The admissible (minimum) sample size as function of lot size is listed in Table \ref{tab_n_N1N2_c1} and plotted in Fig.~\ref{fig_n_N1N2_c1} (Fig.~\ref{fig_nmin_cALL_withISO}).

\item Acceptance number $c=2$: The minimal binomial result $(88,2)$ is not globally  conservative. Mainly the AQL condition \eqref{MID3} at \MIDa{} is relevant, requiring the minimum sample size to grow for decreasing lot size. This sharp upturn continues down to $N=200$, where the global lower bound \eqref{lowerboundN} is reached. 
The admissible (minimum) sample size as function of lot size is listed in Table \ref{tab_n_N1N2_c2}  and  plotted in Fig.~\ref{fig_n_N1N2_c2} (Fig.~\ref{fig_nmin_cALL_withISO}). 
Below roughly $N=500$, the consumer's risk drops far below the MID threshold such that $c=0,1$ acceptance sampling becomes more appropriate.

\item 
Figure \ref{fig_nmin_cALL_withISO} also shows  
the ISO 2859-1 sampling plans recommended officially for MID modules F and F1 \cite{Welmec8.10}. 
Clearly, these are not minimal for low acceptance numbers $c=0,1,2$. Moreover, their growth with sample size, roughly as $n \sim \sqrt{N}$, is not justified by the MID conditions as read in Sec.~\ref{sec:intro}. 
By contrast, the MID-optimised sample plans derived here for $c=0,1,2$ remain valid for arbitrarily large lots.   
\end{enumerate} 


\section{Simplified single-sampling scheme} 
\label{sec:summary}

\begin{table*}
\centering
\setlength{\tabcolsep}{0pt}
\renewcommand{\arraystretch}{1.2}
\rowcolors{4}{white}{gray!25}
\begin{tabular}{%
R{2.75em}R{1.5em}R{2.75em}p{0.5em}|
C{1.25cm}*{2}{C{3.25em}}|
C{1.25cm}*{2}{C{3.25em}}|
C{1.25cm}*{2}{C{3.25em}}}
&&&& 
\multicolumn{3}{c|}{Acceptance $c=0$} &
\multicolumn{3}{c|}{Acceptance $c=1$} &
\multicolumn{3}{c}{Acceptance  $c=2$} \\
\multicolumn{4}{c|}{Lot}  	& 
Sample & \multicolumn{2}{c|}{Risk [\%]} &
Sample & \multicolumn{2}{c|}{Risk [\%]} &
Sample & \multicolumn{2}{c}{Risk [\%]} \\
\multicolumn{4}{c|}{$N$} 	
& $n$ & $ \alpha \leq $  & $ \beta \geq $
& $n$ & $ \alpha \leq $  & $ \beta \geq $
& $n$ & $ \alpha \leq $  & $ \beta \geq $
\\
\hline 
    21  &to&  24 &&   20  & (43.4) &  0.69  
    &-&-&-&-&-&-\\ 
    25  &to&   31  &&   23  & (44.6) &   0.81
    &-&-&-&-&-&-\\ 
    32  &to&   41 &&   26 &   (40.6)  &   1.90 
     &-&-&-&-&-&-\\
    42  &to&   61  &&   30 &   (40.5) &   2.09 
     &-&-&-&-&-&-\\
    62  &to&  122  &&   35 &   (40.1)  &   2.30  
     &-&-&-&-&-&-\\
    123 &to&  248  &&   38 &    36.6  &  3.66  
                    &   - & - & - &-&-&-\\
    249 &to&  500   &&  40 & 35.4 & 4.21  
                    &  63 & 12.2 & 3.77   
                    &-&-&-\\
501     &to& 1000   &&   41 & 34.9  & 4.48  
                    &   65 & 13.4  & 4.23  
                    &   - &  - &  -  \\
1001    &to&  $\infty$ &&42 & 35.0   & 4.44  
               &   66 & 14.1  & 4.45 
               &   88 & 5.87  & 4.25 \\
\end{tabular} 
\caption{%
Proposal for a simplified single-sampling scheme optimised for MID modules F and F1  
based on the data of Tables \ref{tab_nmin_N_c0}, \ref{tab_n_N1N2_c1}, and \ref{tab_n_N1N2_c2}.
In all cases, the producer's risk is $\alpha>5\%$, and the consumer's risk is $\beta<5\%$, as required by MID; here only the variable upper (lower) bound for $\alpha$ ($\beta$) is shown. 
For isolated lots of size $N<100$, the producer's risk (listed in parentheses) has no operational significance because the AQL $\pa=1\%$ would correspond to less than a single defective item ($M<1$). 
} 
\label{MID_simp_scheme}
\end{table*}

\begin{figure*}\centering
\includegraphics[width=.9\textwidth]{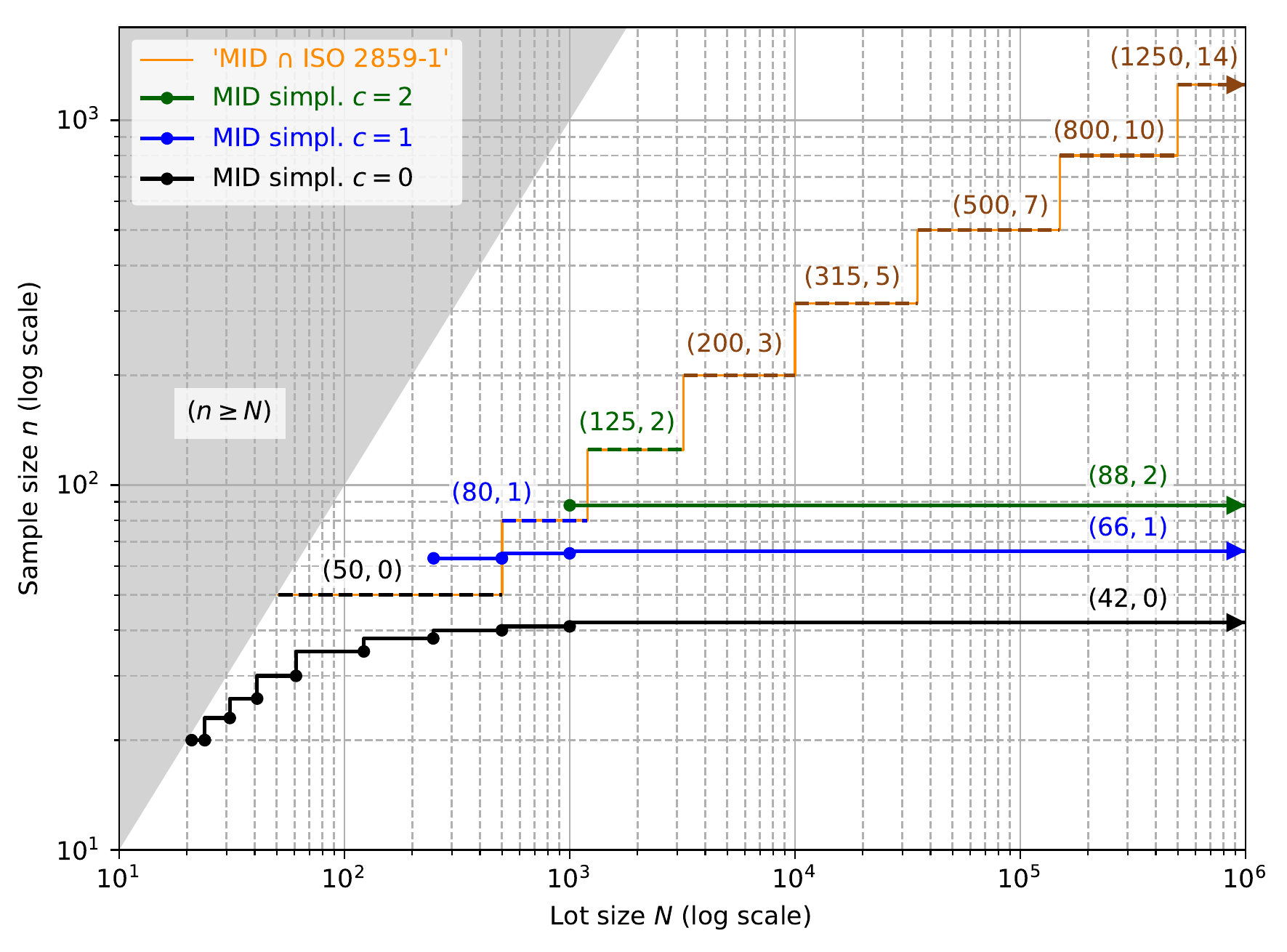}
\caption{%
Sample size as function of lot size for the simplefied, MID-optimised sampling scheme listed in Table \ref{MID_simp_scheme}, together with the single-sampling scheme from ISO 2859-1 \cite{Welmec8.10}, on a double-logarithmic scale. 
}   
\label{fig_nmin_cALL__simple_withISO}  
\end{figure*} 

The numerical minimisation of sample size  as function of lot size under MID conditions produces the sample plans $(n,c)$ that are listed in Tabs.~\ref{tab_nmin_N_c0}, \ref{tab_n_N1N2_c1} and ~\ref{tab_n_N1N2_c2} for $c=0,1,2$, respectively.  
Admittedly, these tables (and corresponding figures) are more complicated than the sample plans extracted from ISO 2859-1 that are recommended hitherto \cite{Welmec8.10}. 
It appears therefore advisable to condense the optimised plans into a simplified sampling system that is still (almost) optimal as far as the MID conditions are concerned, but efficient to use in practice. 
Table \ref{MID_simp_scheme} contains a proposal for such a sampling system, retaining the essential characteristics of the MID-minimised sampling plans. 
Figure \ref{fig_nmin_cALL__simple_withISO} represents this simplified scheme, to be compared to the exact data represented in Fig.~\ref{fig_nmin_cALL_withISO}.  
The main steps taken to arrive at the simplified scheme are:
\begin{itemize}
\item choosing a simple lower bound $N>1000$ common to the relevant, minimal binomial sampling plans $(42,0)$, $(66,1)$, and $(88,2)$; 
\item binning lot sizes into a manageable number of intervals common to all relevant acceptance numbers $c=0,1,2$.   
\end{itemize} 
Such a simplified proposal is slightly arbitrary in the sense that one may as well choose different lot-size intervals with, consequently, a different set of sample sizes. 
The present proposal aims at a reasonable compromise between the complexity of the scheme and its logistic efficiency.  
Independently of the finer details, the main feature of our proposal is that the sample size does not grow with the lot size above $N=1000$, while offering the full statistical protection required by the MID.

For small lot sizes $N\leq 248$, only zero-acceptance sampling is found to be practical, with advantageously small samples but rather elevated producer's risks. 
The larger the lot, the more options are offered for statistical sampling.
So one thing remains to be specified: Which acceptance number $c=0,1$  should be chosen above $N=248$, and which of $c=0,1,2$ above $N=1000$?  
Only if the producer knows in advance that the quality level of the lot is perfect ($p=0$ by production or strict exit controls), then $c=0$ with the smallest possible sample size is of course  preferable. 
If the quality level is finite ($p>0$) but unkown, then the choice of the sample plan is not uniquely determined by the MID conditions.  
Instead, the producer may decide which risk $\alpha$ is worth taking, depending on the lot's known or expected quality level, the production cost of each item, its inspection cost, etc.. 
If, on the one hand, production costs are low, but inspection costs are high, then an elevated producer's risk may be acceptable, and the smallest acceptance number $c=0$ is preferable with, consequently, the smallest sample size. 
If, on the other hand, production costs are high and inspection costs are low, then the producer's risk can be substantially reduced by raising the acceptance number to $c=2$, together with an altogether moderate increase of sample size.    
In order to arrive at a truly (or at least approximately) optimal choice, one would have to define an appropriate cost function and determine the optimal acceptance number by an in-depth cost-benefit analysis \cite{CBA}. 

\section{Outlook: alternative interpretation of the AQL  criterion}
\label{sec:outlook}

The AQL criterion of the MID, interpreted as in expressions \eqref{MID1},\eqref{MID3}, and \eqref{MIDa}, sets a \emph{lower} bound on the producer's risk. This is a somewhat curious condition, already from an economic and contractual point of view: why should the MID guarantee a one-sided protection of the consumer's interest at two points, without limiting the producer's risk at the acceptance quality level? 
Furthermore, this condition leads to a number of awkward mathematical properties that belie standard statistical knowledge in acceptance sampling. 
For example, the binomial approximation for the true hypergeometric distribution is known to be conservative in the sense that it provides larger samples than necessary for a certain finite lot size \cite{SchillingNeubauer2017}. 
This turns out not to be true here for $c\geq 1$ sampling in the lot-size range where the AQL condition dominates and requires larger samples than the binomial approximation. 

Indeed, standard textbooks formulate the AQL inequality usually the other way around, setting an upper bound also on the producer's risk---see, e.g., Eq.~(10.57) in \cite{Mathews2010}. 
Additionally, an upper bound on both the consumer's and producer's risk is compatible with the framework of hypothesis testing (see, e.g., section 2.2 in \cite{Klauenberg2017})\rev{, which could have important conceptual and operational benefits
\cite{Samohyl2018}. }
Details of MID-optimised sampling plans obtained under such an alternative interpretation of the MID's AQL criterion, however, are beyond the scope of the present work and will be presented in a forthcoming publication.

\section*{Acknowledgements}  
The author is indebted to Dr.~Katy Klauenberg for a critical reading of the manuscript and insightful comments.  

\appendix 

\section{Interpretation of MID criteria for finite lots}
\label{app:MID_finiteN}

\begin{figure*}\centering
\includegraphics[width=.9\textwidth]{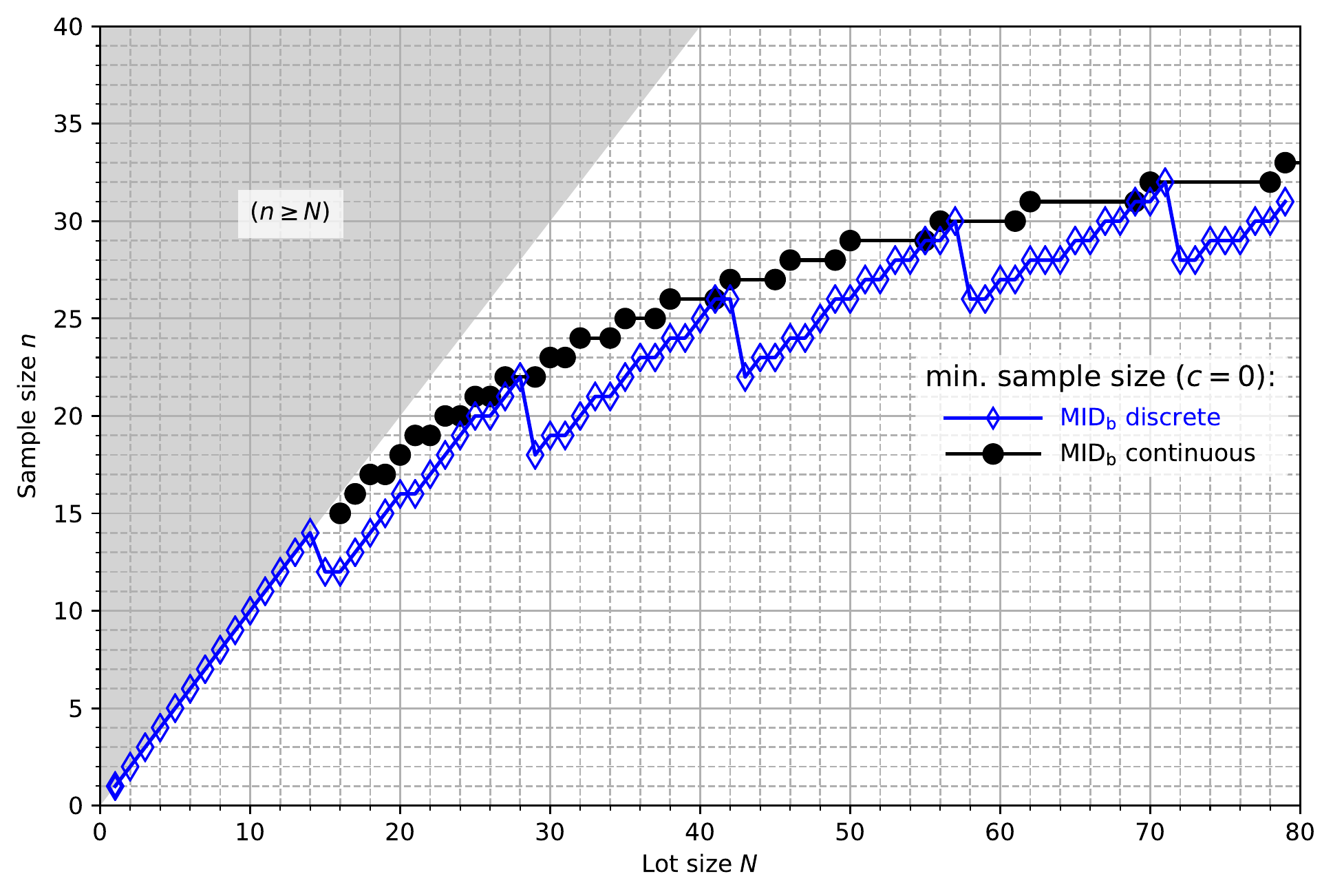}
\caption{%
Minimum sample size as function of lot size for zero-acceptance sampling $(c=0)$. The purely `discrete' MID criteria  \eqref{MID_finiteN_2} (blue diamonds) result in non-monotonic jumps, whereas the analytical coninuation introduced in Sec.~\ref{finiteN_c0.sec} (black circles) provides a non-decreasing, conservative bound for the minimum sample size.  
Since $c=0$, only the \MIDb{} criterion is actually relevant (cf.~Sec.~\ref{finiteN_c0.sec}). 
}   
\label{fig_n_N_allowed_c0}  
\end{figure*}

We need to discuss the applicability of the MID criteria, either \eqref{MID1} and \eqref{MID2} or 
\eqref{MID3} and \eqref{MID4}, for isolated lots of finite size, where quality levels and acceptance probabilities are discrete sets. 
Consider a lot of size $N$, containing an (unknown) number  $M\in\{0,1,\dots,N\}$ of non-conforming items. 
One can assign the quality level $p=M/N$ to this lot and discuss its acceptance probability $\Pac(p)$ under a certain sampling plan as function of the meaningful values 
\be 
p\in \mathcal{D}_N=\{0,\tfrac{1}{N},\tfrac{2}{N},\dots,\tfrac{N-1}{N},1\}.     
\label{allowed_p}
\ee
Now, the two special quality levels of the MID conditions, $\pa=\frac{1}{100}$ and $\pb=\frac{7}{100}$,  are only in the domain $\mathcal{D}_N$ 
if $N$ is an integer multiple of 100.
For all other lot sizes, $p_i\notin \mathcal{D}_N$ such that $\Pac(p_i N,N,n,c)$ as given by eq.~\eqref{hypergeom} is not defined, at least not in simple operational terms related to the sampling of a single lot, where $M$ must be an integer. 
As a consequence, the MID conditions \eqref{MID3} and \eqref{MID4} cannot be applied as such for deciding whether a sampling plan is admissible or not. 
And even if per chance $N$ is a multiple of 100, also the image of $\mathcal{D}_N$ under $\Pac$, the set $\Pac(\mathcal{D}_N)$ of the different acceptance probabilities, is now a discrete set that will generally not contain the two values $\Pa=\frac{95}{100}=\frac{19}{20} $ and $\Pb=\frac{5}{100}=\frac{1}{20}$. 
In other words, in most cases there will be no $q_i\in \mathcal{D}_N$ such that $\Pac(q_i)=P_i$, exactly. Thus, also the MID conditions \eqref{MID1} and \eqref{MID2} cannot be applied as such. 

It thus becomes clear that the wording of the MID implicitly assumes a continuous description, which applies only to process sampling with a formally infinite lot size (called ``type B'' in \cite{SchillingNeubauer2017,Mathews2010}). 
However, the MID product testing in modules F and F1 typically involves isolated lots of finite size, where the continuity of type B testing cannot be taken for granted. 
In a first attempt to render the MID conditions meaningful for finite lot sizes, we have tried to interpret the wording ``corresponding to a probability of acceptance of $P_i$'' as meaning ``corresponding to a probability of acceptance of \emph{at least} $P_i$''.%
\footnote{In the continuous case, this follows already from the monotonicity and continuity of $\Pac$: There is exactly one $q_i$ such that $\Pac(q_i)=P_i$ and for which MID implies $q_i<p_i$. Now take any $p$ such that $\Pac(p) > \Pac(q_i)$. The monotonicity ($p>q_i\Rightarrow \Pac(p)\leq\Pac(q_i)$) in its negated form $\Pac(p) > \Pac(q_i) \Rightarrow p\leq q_i$ then implies together with $q_i<p_i$ the r.h.s. $p<p_i$. So the MID conditions could have been formulated with a ``probability of \emph{at least} $P_i$'' from the start.} 
Then, the two MID conditions \eqref{MID1} and \eqref{MID2} are rather 
\begin{align}
 \Pac(p) &\geq P_i \Rightarrow p < p_i, \quad i=\text{a,b},   \label{MID_finiteN}
\end{align}  
which can be tested for all $p\in \mathcal{D}_N$. A logically equivalent, but more practical criterion that can be readily evaluated with a computer is obtained by the negation of \eqref{MID_finiteN}, namely 
\begin{align}
p & \geq p_i \Rightarrow \Pac(p) < P_i, \quad i=\text{a,b}.   \label{MID_finiteN_2}
\end{align}  
Under these premises, in order to test the admissibility of a certain sampling plan, one only needs to take the first allowed quality levels $p\in\mathcal{D}_N$ larger than or equal to $\pa$ and $\pb$, respectively,  and check whether $\Pac(p)$ at these points is smaller than $\Pa$ and $\Pb$, respectively. 
If that is the case, then the plan is approved (monotonicity guarantees that even larger $p$ cannot yield larger values of $\Pac$), and if not, it is rejected.

While this interpretation is logically satisfying and economical to evaluate, it leads to inconsistencies due to discretisation effects. 
Indeed, when the lot size is increased such that one quality level $p_j\in\mathcal{D}_N$ drops below $\pb$, the next higher quality level $p_{j+1}$ becomes  relevant such that suddenly a smaller sample may become allowed. 
Figure \ref{fig_n_N_allowed_c0} shows the $c=0$ minimum sample size as function of lot size  resulting from the purely  `discrete' criterion \eqref{MID_finiteN_2}. 
It features  a prominent saw-tooth structure where raising the lot size at certain points (e.g., from $N=42$ to 43) suddenly lowers the minimum sample size (e.g., from $n=26$ to 22).

Such an erratic dependence of sample size on lot size is arguably not a desirable feature of an acceptance sampling plan. Therefore, we have proposed in Sec.~\ref{finiteN_c0.sec} to extend the acceptance probabilities in the standard manner to non-integer item numbers, formally adopting a type-B testing scenario that allows us to evaluate $\Pac(p)$ at $\pa$ and $\pb$ for any lot size and thus to use the MID criteria \eqref{MID3} and \eqref{MID4} just as in a truly continuous case. 
The advantage is two-fold: 
First, the resulting sample size is conservative, namely, never smaller than prescribed by the discrete criterion \eqref{MID_finiteN_2} at the consumer's LQ point, where also the binomial model for infinite lots is conservative.  
Second, the minimum lot size resulting from the consumer's LQ point condition is a non-decreasing function of lot size, as evident from Fig.~\ref{fig_n_N_allowed_c0} (see also Fig.~\ref{fig_n_N1N1_c0} and the relevant portion of Fig.~\ref{fig_n_N1N2_c1}).

\end{multicols}


\begin{thebibliography}{9}
\bibitem{MID} Directive 2014/32/EU of the European Parliament and of the Council of 26 February 2014 on the harmonisation of the laws of the Member States relating to the making available on the market of measuring instruments (recast),  \href{http://eur-lex.europa.eu/eli/dir/2014/32/oj}{OJ L96/149 (2014)}

\bibitem{Dodge1929} \rev{H~F.~Dodge, H.~G.~Romig, "A Method of Sampling Inspection", \href{https://doi.org/10.1002/j.1538-7305.1929.tb01240.x}{Bell Syst. Tech. J. \textbf{8}, 613-631, 1929}}


\bibitem{Welmec8.10} WELMEC European cooperation in legal metrology: Working group 8, \href{https://www.welmec.org/2/latest/guides/810/}{Measuring Instruments Directive (2014/32/EU): Guide for generating sampling plans for statistical verification according to Annex F and F1 of MID 2014/32/EU} (2018)

\bibitem{ISO2859-1} ISO/TC 69/SC 5, "Sampling procedures for inspection by attributes - Part 1: Sampling schemes indexed by acceptance quality limit (AQL) for lot-by-lot inspection", ISO 2859-1:1999(E)

\bibitem{SchillingNeubauer2017} 
E.G.~Schilling, D. V. Neubauer, \textit{Acceptance Sampling in Quality Control} (CRC Press, Boca Raton, 2017) 

\bibitem{Mathews2010} P.~Mathews, \textit{Sample Size Calculations. Practical Methods for Engineers and Scientists} (Mathews Malnar and Bailey, Fairport Harbor, 2010). 


\bibitem{AbramovitzStegun} M.~Abramowitz and I.~A.~Stegun (eds.), \textit{The Handbook of Mathematical Functions with Formulas, Graphs, and Mathematical Tables}  National Bureau of Standards Applied Mathematics Series 55 (1972).   

\bibitem{CBA} H.~F.~Campbell and R.~Brown,  \textit{``Incorporating Risk in Benefit-Cost Analysis''. Benefit-Cost Analysis: Financial and Economic Appraisal using Spreadsheets}  (Cambridge University Press, Cambridge, 2003) 
\bibitem{Klauenberg2017} K. Klauenberg and C. Elster, "Sampling for  assurance of future reliability", \href{https://doi.org/10.1088/1681-7575/54/1/59}{Metrologia \textbf{54}, 59 (2017)}  

\bibitem{Samohyl2018} \rev{R.W. Samohyl, "Acceptance sampling for attributes", \href{https://doi.org/10.1007/s40092-017-0231-9}{J. Ind. Eng. Int. \textbf{14}, 395 (2018)}}  


\end{thebibliography}
\end{document}